\newcommand{\fun}[1]{\mathrm{#1}}
\newcommand{\mket}[1]{| #1 \rangle}
\newcommand{\mbraket}[2]{\langle #1 | #2 \rangle} 
\newcommand{\mtr}[1]{\mathrm{Tr}\left( #1 \right)}
\newcommand{\mptr}[2]{\mathrm{Tr}_{#2}\left( #1 \right)}
\newcommand{\keywords}[1]{\par\addvspace\baselineskip\noindent\keywordname\enspace\ignorespaces#1}
\begin{document}


\mainmatter

\title{Sorting of quantum states with respect to amount of entanglement included}

\author{Roman Gielerak \and Marek Sawerwain}
\authorrunning{Roman Gielerak \and Marek Sawerwain}
\tocauthor{Roman Gielerak and Marek Sawerwain}

\institute{Institute of Control \& Computation Engineering \\
University of Zielona G\'ora, ul. Podg\'orna 50, Zielona G\'ora 65-246, Poland \\
\email{{R.Gielerak,M.Sawerwain}@issi.uz.zgora.pl}
}

\maketitle

\begin{abstract}
The canonical Schmidt decomposition of quantum states is discussed and its implementation to the Quantum Computation Simulator is outlined. In particular, the semiorder relation in the space of quantum states induced by the lexicographic semiorder of the space of the Schmidt coefficients is discussed. The appropriate sorting algorithms on the corresponding POSETs consisting from quantum states are formulated and theirs computer implementations are being tested.
\keywords{entanglement, linear and partial semiorder, sorting of entangled quantum states, Nielsen theorem about entanglement for quantum states, canonical Schmidt decomposition of a general quantum states}
\end{abstract}

\section{Introduction}

Let $\mathcal{E}(H_A \otimes H_B)$ be a set of quantum states of a given composite system $S_{A+B}$ on the corresponding Hilbert space $H_A \otimes H_B$. As it is well known the problem to decide whether a given state $\rho \in \mathcal{E}$ is entangled or not is in general NP-HARD \cite{4}. Although some mathematical procedures for this purpose and also for deriving in a quantitative way the amount of entanglement included do exist the real problem with them is that they are hardly to be efficiently calculable \cite{1} and \cite{2}.

In the case of pure states the operational semiorder $\mket{\psi} \stackrel{LOCC}{\succ} \mket{\varphi}$ on the space of pure states (meaning that the state $\mket{\psi}$ can be transformed into the vector $\mket{\varphi}$ by exclusive use of only LOCC class of operations) has been formulated in an effectively calculable way by appealing to the corresponding Schmidt decomposition in Nielsen \cite{3}. The relation $\stackrel{LOCC}{\succ}$ introduces only partial order on the space of pure states and it is why the adaptations of standard sorting algorithms of the corresponding POSETs are much more involved \cite{5} and \cite{6} comparing to the case of linearly ordered sets. Several adaptations of this sorting procedures are being adopted and tested on the Zielona G\'ora Quantum Computing Simulator and some recent results of this kind will be presented in the present contribution, see also [7].

Another topic discussed in our contribution is an attempt to generalize the Nielsen result to the case of general mixed quantum states. For this purpose the Schmidt decomposition in the corresponding Hilbert-Schmidt space has been used and certain semiorder relation in the space of quantum states has been introduced. The effort has been made to connect the introduced semiorder with several notions of quantitative measures of entanglement.

A well known distillation of entanglement procedure \cite{Chuang_QCQI} also introduces a partial semiorder on the space of quantum states. However this process is hardly to be effectively calculable and moreover it requires to have many (infinitely many in fact) copies of a given unknown quantum state at hands in order to perform the distillation process.

Although we have no complete proof we formulate a conjecture that the partial order induced by the lexicographic order of the Schmidt decomposition coefficients is connected to the operational meaning saying that in the state $\rho_1$ is no less entanglement contained then in the state $\rho_2$ if the state $\rho_1$ may be transformed into $\rho_2$ by means of local operations supplemented by classical communication only.


\section{Algorithms for sorting quantum states}

\subsection{Canonical Schmidt decomposition of quantum states}

For a given finite-dimensional Hilbert space $\mathcal{H}$ the corresponding Hilbert-Schmidt space is denoted as $HS(\mathcal{H})$. Let us recall that the space $HS(\mathcal{H})$ consist of all linear operations acting on $\mathcal{H}$ and equipped with the following scalar product:
\begin{equation}
\mbraket{A}{B}_{HS} = \mtr{A^{\dagger}B}
\end{equation}
A system $(E_i,i=1,\ldots,\dim(\mathcal{H})^2)$ of linearly independent matrices on $\mathcal{H}$ is called complete orthonormal system iff $\mbraket{E_i}{E_j}_{HS}=1$. If moreover all $E_i$ are hermitean the system $(E_i)$ is called complete hermitean orthonormal system.
\begin{proposition}
Let $\rho \in \mathcal{E}(\mathcal{H}_A \otimes \mathcal{H}_B)$. Then there exist: a number $r > 0$ (called the canonical Schmidt rank of $\rho$) and a complete orthonormal system $(E_i^A)$ (resp. $(E_j^B)$) in $HS(\mathcal{H}_A)$ (resp. $HS(\mathcal{H}_B)$) and such that
\begin{equation}
\rho = \sum_{\alpha=1}^r \lambda_{\alpha} E^A_{\alpha} \otimes E^B_{\alpha}
\label{lbl:eqn:schmidt:decomposition}
\end{equation}
where the numbers $\lambda_{\alpha} > 0$ are called (the canonical) Schmidt coefficients of $\rho$. If all $\lambda_{\alpha}$ are different then this decomposition is unique.
\end{proposition}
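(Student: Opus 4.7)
The plan is to reduce the statement to the ordinary Schmidt decomposition for vectors in a tensor product Hilbert space, applied to $\rho$ viewed as an element of $HS(\mathcal{H}_A)\otimes HS(\mathcal{H}_B)$. The first step is to recall the canonical isomorphism $HS(\mathcal{H}_A\otimes\mathcal{H}_B)\cong HS(\mathcal{H}_A)\otimes HS(\mathcal{H}_B)$, obtained by sending $A\otimes B$ (a tensor of operators) to the operator $A\otimes B$ (acting on the tensor of Hilbert spaces). This map is an isometry with respect to the Hilbert--Schmidt inner product, since $\langle A_1\otimes B_1, A_2\otimes B_2\rangle_{HS}=\mathrm{Tr}(A_1^\dagger A_2)\,\mathrm{Tr}(B_1^\dagger B_2)$, and hence it turns $\rho$ into a genuine vector in the tensor product Hilbert space $HS(\mathcal{H}_A)\otimes HS(\mathcal{H}_B)$.

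Next I would fix arbitrary orthonormal bases $(F_i^A)_{i=1}^{d_A^2}$ of $HS(\mathcal{H}_A)$ and $(F_j^B)_{j=1}^{d_B^2}$ of $HS(\mathcal{H}_B)$, and expand $\rho=\sum_{ij}c_{ij}\,F_i^A\otimes F_j^B$. Applying the singular value decomposition to the coefficient matrix $C=(c_{ij})$ gives $c_{ij}=\sum_{\alpha=1}^r u_{i\alpha}\lambda_\alpha \bar v_{j\alpha}$ with $\lambda_\alpha>0$ and $U,V$ unitary, where $r$ is the matrix rank of $C$. Setting $E_\alpha^A:=\sum_i u_{i\alpha}F_i^A$ and $E_\alpha^B:=\sum_j \bar v_{j\alpha}F_j^B$ produces orthonormal families (unitarity of $U$ and $V$ preserves the inner products), and a straightforward substitution gives the required expansion $\rho=\sum_{\alpha=1}^r \lambda_\alpha E_\alpha^A\otimes E_\alpha^B$. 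To obtain genuine \emph{complete} orthonormal systems I extend $(E_\alpha^A)_{\alpha\le r}$ and $(E_\alpha^B)_{\alpha\le r}$ to full orthonormal bases of $HS(\mathcal{H}_A)$ and $HS(\mathcal{H}_B)$ by Gram--Schmidt; the additional basis elements contribute nothing to $\rho$ because their indices lie outside the summation range.

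Uniqueness under the assumption that the $\lambda_\alpha$ are pairwise distinct is inherited from uniqueness of the SVD: when all singular values are simple, the left and right singular vectors are determined up to a common phase $\lambda_\alpha \mapsto \lambda_\alpha$, $(u_{\cdot\alpha},v_{\cdot\alpha})\mapsto(e^{i\theta_\alpha}u_{\cdot\alpha},e^{i\theta_\alpha}v_{\cdot\alpha})$. Translating back, the $E_\alpha^A$ and $E_\alpha^B$ are determined up to the phase replacement $(E_\alpha^A,E_\alpha^B)\mapsto(e^{i\theta_\alpha}E_\alpha^A,e^{-i\theta_\alpha}E_\alpha^B)$, which leaves each summand $\lambda_\alpha E_\alpha^A\otimes E_\alpha^B$ invariant. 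The only mild point of care is that the splitting into the ``active'' part ($\alpha\le r$) and its orthogonal complement is unambiguous precisely because every $\lambda_\alpha$ is strictly positive, so no extra care is needed to rule out further redefinitions mixing active with inactive basis vectors.

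The main obstacle, such as it is, is bookkeeping rather than mathematics: verifying that the isomorphism in the first step is an \emph{isometry} of the HS inner products (so that ``orthonormal'' has the same meaning in the tensor Hilbert space as in $HS(\mathcal{H}_A\otimes\mathcal{H}_B)$), and carefully formulating the phase-ambiguity statement in the uniqueness clause. Everything else is a direct application of SVD.
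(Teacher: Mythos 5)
Your proposal is correct and follows essentially the same route as the paper: the paper's ``constructive route'' likewise expands $\rho$ in a product orthonormal system $(F^A_i\otimes F^B_j)$ of the Hilbert--Schmidt space and applies the SVD to the coefficient matrix $C=(c_{i,j})$, the singular vectors supplying the systems $(E^A_\alpha)$, $(E^B_\alpha)$. (Incidentally, your identification of $\lambda_\alpha$ with the singular values of $C$ is the natural one; the paper's statement that the singular values equal the \emph{squares} of the Schmidt numbers appears to be a slip.)
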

\begin{remark}
A different notions of Schmidt decomposition of density matrices are being discussed in the literature \cite{Terhall}. Our Schmidt characteristics like the canonical Schmidt rank and (canonical) Schmidt coefficients and the corresponding orthonormal systems are in unique way connected with a given $\rho$ and in principle all the properties (separability|nonseparability for example) of $\rho$ should be obtainable form this decomposition. For example if the corresponding $E^A_{\alpha}$, $E^B_{\alpha}$ in formula (\ref{lbl:eqn:schmidt:decomposition}) are nonnegative and therefore hermitean then $\rho$ is separable.
\end{remark}

\begin{remark}
It is well known \cite{Chen}, \cite{Rudolph} that for separable states the sum of the canonical Schmidts coefficients is always less or equal to 1. This leads to the separability criterion known as cross norm criterion.
\end{remark}

\begin{remark}
The closed subspace of $HS(\mathcal{H})$  consisting of hermitean matrices forms a real Hilbert space. Therefore if the SVD theorem extends to the real Hilbert space case then the corresponding systems in formula (\ref{lbl:eqn:schmidt:decomposition}) are hermitean by the very construction.
\end{remark}

Now we formulate constructive route to the canonical Schmidt decomposition.
\begin{proposition}
Let $d=\dim(\mathcal{H})$ and let $(E_i,i=1,\ldots,d^2)$ be a system of linear independent matrices on $\mathcal{H}$. Then there exists operation $\mathcal{O}$ converting the system $(E_i)$ into the orthonormal system $(F_i)$. If the system $(E_i)$ consists of hermitean matrices then $\mathcal{O}((E_i))$ is also system formed from hermitean matrices.
\end{proposition}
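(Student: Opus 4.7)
The plan is to apply the Gram--Schmidt orthonormalization procedure to the system $(E_i)$ using the Hilbert--Schmidt inner product $\mbraket{\cdot}{\cdot}_{HS}$, and then verify that when the input matrices are hermitean the process preserves hermiticity. Since $HS(\mathcal{H})$ is a finite-dimensional complex Hilbert space of dimension $d^2$ and the $d^2$ matrices $E_i$ are assumed linearly independent, they form a basis, so the standard Gram--Schmidt construction is directly available and produces the desired orthonormal system $(F_i)$.

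First I would define $\mathcal{O}$ inductively in the usual way: set $F_1 = E_1/\|E_1\|_{HS}$, and for $k \geq 2$ let
\begin{equation}
\widetilde F_k = E_k - \sum_{j=1}^{k-1} \mbraket{F_j}{E_k}_{HS}\, F_j, \qquad F_k = \widetilde F_k / \|\widetilde F_k\|_{HS}.
\end{equation}
Linear independence of $(E_i)$ guarantees $\widetilde F_k \neq 0$ at every step, so the normalization is well defined, and a routine induction shows that $\mbraket{F_i}{F_j}_{HS} = \delta_{ij}$.

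The main observation needed for the second claim is that, when $A$ and $B$ are hermitean, $\mbraket{A}{B}_{HS} = \mtr{A^{\dagger}B} = \mtr{AB}$ is real, because $\overline{\mtr{AB}} = \mtr{(AB)^{\dagger}} = \mtr{B^{\dagger}A^{\dagger}} = \mtr{BA} = \mtr{AB}$. Hence if every $E_i$ is hermitean and $F_1,\ldots,F_{k-1}$ are already hermitean, then each coefficient $\mbraket{F_j}{E_k}_{HS}$ in the construction of $\widetilde F_k$ is real, so $\widetilde F_k$ is a real linear combination of hermitean matrices and is therefore hermitean. The norm $\|\widetilde F_k\|_{HS}$ is a positive real, so dividing by it preserves the hermitean property. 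Induction then yields that the entire output system $(F_i) = \mathcal{O}((E_i))$ is hermitean.

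I do not expect a serious obstacle here: the statement is really a restatement of the fact that Gram--Schmidt applies verbatim to any inner-product space, and that the hermitean matrices form a \emph{real} subspace of $HS(\mathcal{H})$ on which $\mbraket{\cdot}{\cdot}_{HS}$ restricts to a real inner product (cf.\ Remark~3 above). The only point to be explicit about is the reality of the projection coefficients, which is exactly what allows the real subspace to be invariant under the orthonormalization step.
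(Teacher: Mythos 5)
Your proof is correct and follows the same route as the paper, which simply invokes the Gram--Schmidt orthonormalisation procedure as the operation $\mathcal{O}$. You additionally spell out the one detail the paper leaves implicit --- that $\mbraket{A}{B}_{HS}$ is real for hermitean $A,B$, so the projection coefficients are real and the hermitean (real) subspace is preserved --- which is exactly the right point to make explicit.
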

\begin{proof}
The well known Gram-Schmidt orthonormalisation procedure is used as an example of the converting operation $\mathcal{O}$. $\blacksquare$
\end{proof}
Let now $(F^A_i)$ (resp. $(F^B_j)$) be any orthonormal system in $HS(\mathcal{H}_A)$ (resp.$HS(\mathcal{H}_B)$). Then the system $(F^A_i \otimes F^B_j)$ forms a complete orthonormal system in $HS(\mathcal{H}_A \otimes \mathcal{H}_B)$. Thus taking any $\rho \in \mathcal{E}(\mathcal{H}_A \otimes \mathcal{H}_B)$ we can decompose:
\begin{equation}
	\rho = \sum_{i,j=1} c_{i,j} F^A_i \otimes F^B_j
\end{equation}
where $c_{i,j}=\mtr{\rho F^A_i \otimes F^B_j}$.

Then we apply SVD operation to the matrix $C=(c_{i,j})$ yielding (like in the vector case) all the data for supplying the decomposition (\ref{lbl:eqn:schmidt:decomposition}). In particular the singular values of the matrix $C$ are equal to the squares of the Schmidt numbers from (\ref{lbl:eqn:schmidt:decomposition}).


\subsection{Linear and partial semi-order for entanglement states}

Firstly, we present a simple algorithm to realise sorting a set of quantum states 
by using von Neumann entropy notion. We will call this algorithm a linear sorting by 
entropy algorithm (abbreviated as LSEA). The pseudo-code of LSEA 
is presented in Algorithm~(\ref{lbl:linear:sorting:entropy}).

\begin{algorithm}[!ht]
\caption{Algorithm for sorting entangled quantum states using the von Neumann entropy}
{\scriptsize
\begin{algorithmic}[1]
\Function{LSEA}{ $\Sigma : \{ \rho_1, \rho_2, \ldots, \rho_N \}$ } : $\Sigma^{SORT} : \{ \rho_1, \rho_2, \ldots, \rho_N \}$ 
\For{i=1 \textbf{to} N}
\State $\rho^{A}_{i} = \mptr{\rho_i}{H_A}$
\State $[\sigma_i, V_i] = \mathrm{EigenSystem}(\rho^{A}_{i})$
\State $\mathrm{En}(i) = E(\rho^{A}_{i}) = - \sum_{k} \lambda^i_k \log \lambda^i_k$
\EndFor
\State $\Sigma^{SORT}$ = \Call{ClassicalSort}{$\{ \mathrm{En}(1), \mathrm{En}(2), \ldots, \mathrm{En}(N) \}$}
\State \textbf{return} $\Sigma^{SORT}$
\EndFunction
\end{algorithmic}\label{lbl:linear:sorting:entropy}
}
\end{algorithm}

The second presented algorithm realises sorting of entangled states using the Schmidt decomposition. The pseudo-code 
is presented in the Algorithm~(\ref{lbl:partial:order:sorting}). The input of the Algorithm (\ref{lbl:partial:order:sorting}) is now a list $\mathbb{V}$ of vector states on the space $\mathcal{H}=\mathcal{H}_A \otimes \mathcal{H}_B$. The output is divided into two parts. The first part is the partitioning of $\mathbb{V}$: $V=[V_1,\ldots,V_p]$ where
\begin{align}
V(i) \in \mathbb{V}, \mathrm{SchmidtRank}(V_i) = r_i = const; \; \mathrm{and} \\ \notag
1 \leq r_1 < r_2 < \ldots < r_p \leq \min(\dim \mathcal{H}_A, \dim \mathcal{H}_B), \; U_i V(i) = \mathbb{V}
\end{align}
i.e. the partitioning with respect to increasing Schmidt's ranks. Additionally, we return the complete data of merging of each $V_i$.
\begin{figure}[!ht]
\includegraphics[height=3.0cm]{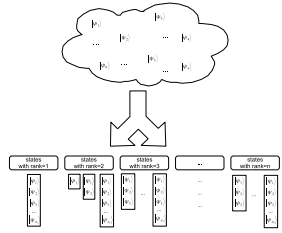}\centering
\caption{General idea of sorting quantum states where the lexicographic order is used. In result the obtained structure represents the partial order where in the sets of quantum states with the same Schmidt rank may appear a linear chains which are non-comparable}
\end{figure}

We need to define the corresponding semiorder relation. For a given pair of quantum states $\rho_1$ and $\rho_2$ we perform the canonical Schmidt decomposition (\ref{lbl:eqn:schmidt:decomposition}) of them obtaining the corresponding Schmidt ranks $r_1$, $r_2$ and the corresponding Schmidt coefficients $\lambda_{\alpha}(1)$, $\lambda_{\alpha}(2)$. Having this data we can formulate the following algorithm which in fact introduces the semiorder relation in the space of quantum states. This algorithm will be called \textsc{SD-QueryOracle}.
\begin{algorithm}[!ht]
\caption{Implementation of query oracle for sorting entangled quantum states}
{
\begin{algorithmic}[1]
\Function{SD-QueryOracle}{ $V : \{ \rho_1, \rho_2 \}$ } : $\{(\rho_1 \prec \rho_2), (\rho_2 \prec \rho_1), \mathrm{non-comparable}\}$
\State \textbf{for}{ i=1 \textbf{to} 2 \textbf{do}} $\{r_i, \{\lambda_{1}(i), \ldots, \lambda_{r_i}(i)\}\} = \mathrm{SchmidtDecomp}(V_i)$
\State \textbf{if} $(r_1 > r_2)$ return $(\rho_1 \prec \rho_2)$
\State \textbf{if} $(r_2 > r_1)$ return $(\rho_2 \prec \rho_1)$
\State sort the sets $\{\lambda_{1}(i)_{i=1,2}\}$ in non-increasing order
\State \textbf{if} $(r_1 = r_2)$ \textbf{if} $(\forall_{j=1,...,r_1}\left( \sum_{i=1}^{j} \lambda^2_i(1) \leq \lambda^2_i(2) \right))$ return $(\rho_1 \prec \rho_2)$
\State \textbf{return} non-comparable
\EndFunction
\end{algorithmic}\label{lbl:oracle:two:vec:locc}
}
\end{algorithm}

Having defined the semiorder $\prec$ we can now formulate one of the possible (see \cite{q12} for other versions) sorting sets of quantum states algorithm that will be called \textsc{MergeSort} type algorithm.

\begin{algorithm}[!ht]
\caption{Algorithm for partial ordered sorting entangled quantum states}
{\scriptsize
\begin{algorithmic}[1]
\Function{ChainMergeSort}{ $\Sigma : \{ \rho_1,\ldots, \rho_N \}$ } : $V$
\State\textbf{for} {i=1 \textbf{to} N \textbf{do}} $\{r(i), \{\lambda_{1}(i), \ldots, \lambda_{r_i}(i)\}\} = S(i) = \mathrm{SchmidtDecomp}(V_i)$
\State $r^{\star} = \max(r(1),\ldots, r(n))$
\For{$\alpha$=1 \textbf{to} $r^{\star}$}
\State $V(\alpha) = [ \; ]$
\State \textbf{for} i=1 \textbf{to} N \textbf{do} \textbf{if} (r(i)=$\alpha$) V($\alpha$)=[V($\alpha$), $\rho_{\alpha}$]
\EndFor
\For{$\alpha$=1 \textbf{to} $r^{\star}$}
	\State choice randomly $\rho \in V(\alpha)$
	\State $R^{\alpha}=\left( \rho, \{ \; \} \right)$
	\State $R^{\alpha}(1)=P^{'}$ ; $R^{\alpha}(1)=\{ \; \}$
	\State $U(\alpha)=V(\alpha) \setminus \{ \rho \}$
	\While{ $U(\alpha) \neq \emptyset $ }
		\State choice $\rho_{i} \in U(\alpha)$
		\State $U(\alpha)=U(\alpha) \setminus \{ \rho \}$
		\State construct a chain decomposition $C(\alpha)=\{ C^{\alpha}_1, C^{\alpha}_2, \ldots, C^{\alpha}_q\}$ of $R^{\alpha}$
	\EndWhile
	\For{i=1 \textbf{to} q}
		\State \textbf{I1:} do binary search on $C^{\alpha}_i$ using 
\Call{SD-QueryOracle}{} to find smallest element (if any) that dominates $\rho_i$
		\State \textbf{I2:} do binary search on $C^{\alpha}_i$ using 
\Call{SD-QueryOracle}{} to find largest element (if any) that dominates $\rho_i$
	\EndFor
	\State infer all results $R^{\alpha}$ of \textbf{I1} and \textbf{I1} into $R^{\alpha}$: 
	\State \hspace{1cm} $R^{\alpha}(1) = R^{\alpha}(1) \cup \rho$
	\State \hspace{1cm} $R^{\alpha}(2) = R^{\alpha}(2) \cup R^{\alpha}$
	\State find a chain decomposition $C^{\alpha}$ of $R^{\alpha}$ 
	\State construct ChainMerge($R^{\alpha}$, $C^{\alpha}$) data
\EndFor
\State return ($V=[V_1,V_2,\ldots,V_r]$, ChainMerge(V)=[ChainMerge($V_i$),i=1,\ldots,r])
\EndFunction
\end{algorithmic}\label{lbl:partial:order:sorting}
}
\end{algorithm}

\subsection{Example of usage of linear sorting algorithm}

Let us consider family of Bell maximally entangled states for qubits and qudits. In the case of qubits, these states have the following form
\begin{equation}
\mket{\psi}=\alpha \mket{00} \pm \beta \mket{11} \;\;\; \mathrm{or} \;\;\; \mket{\psi}=\alpha \mket{01} \pm \beta \mket{10} \;\;\; \mathrm{and} \;\;\;  |\alpha|^2 + |\beta|^2 = 1,
\end{equation}
and for qubits there exist exactly four such maximally entangled states. The one of the Bell states for qudits where $d=3$ can be written similarly to the qubit case
\begin{equation}
\mket{\psi}=\alpha_0 \mket{00} + \alpha_1 \mket{11} + \alpha_2 \mket{22} \;\;\; \mathrm{and} \;\;\;  |\alpha_0|^2 + |\alpha_1|^2 + |\alpha_2|^2= 1 .
\end{equation}
In general, the set of d-level Bell maximally entangled states for two qudits can be expressed through the following equation:
\begin{equation} 
\mket{\psi^d_{pq}} = \frac{1}{\sqrt{d}} \sum_{j=0}^{d-1} \fun{e}^{2\pi\imath j p / d} \mket{j}\mket{ (j + q) \; \fun{mod} \; d} . \label{lbl:dLevel:Bell:State}
\end{equation}
It is possible to express equation (\ref{lbl:dLevel:Bell:State}) in terms of qudit gates:
\begin{equation}
\mket{\psi^d_{pq}} = {(I_d \otimes X_d)}^q \cdot {(H_d \otimes I_d)} \cdot {(Z_d \otimes I_d)}^p \cdot \fun{CNOT}_{d} \cdot \mket{00} .
\end{equation}
where $0 \leq p \leq d-1$ and $0 \leq q \leq d-1$ are indices of one of $d^2$ allowed Bell state. The symbol $I$ represents the identity matrix for d-level qudit, and $H$ represents the Hadamard gate and $Z$ and $X$ are generalized Pauli's operators.

A simple function written in the Python programming language which uses the QCS module to generate entangled states is depicted in Fig.~\ref{lbl:fig:python:qcs:bell:state:generator}. We use this function to construct entangled states for earlier prepared quantum register.
\begin{figure}
\begin{center}
\begin{tabular}{|l||l|}
\hline
def make\_psi(r, p, q): & def make\_psi(r, p, q): \\
~~~r.Reset() 				& ~~~r.Reset() \\
~~~for i in range(0,q): & ~~~for i in range(0,q): \\
~~~~~~r.NotN(1)			& ~~~~~~r.NotN(1)\\
~~~r.HadN(0) 				& ~~~r.RandGateRealN(0) \\
~~~for i in range(0,p): & ~~~for i in range(0,p):\\
~~~~~~r.PauliZ(0)			& ~~~~~~r.PauliZ(0)\\
~~~r.CNot(0,1)				& ~~~r.CNot(0,1) \\ \hline
\end{tabular}
\end{center}
\caption{The functions written in Python preparing the entangled Bell states for given register. The symbol denoted by $r$ is an object representing the quantum register and $p$ and $q$ are indices of Bell state generated by this procedure. The left column generates maximally entangled state but in the right column instead of the Hadamard gate we use a randomly generated gate to produce a non-maximally entangled states}
\label{lbl:fig:python:qcs:bell:state:generator}
\end{figure}
The function presented in the left column of Fig.~\ref{lbl:fig:python:qcs:bell:state:generator} generates states which have always the same amount of entanglement. Therefore the function from Fig.~\ref{lbl:fig:python:qcs:bell:state:generator} must be equipped with some additional unitary gate to modify the entanglement amount. In the qubit cases the additional rotation gate after Hadamard gate can be used. In general any unitary gate that realises the rotation through any axis may be used to generate Bell states with uniform distribution of entanglement. Indeed, the right version of function \textit{make\_psi} from Fig.~\ref{lbl:fig:python:qcs:bell:state:generator} possesses this feature. 

Using function from Fig.~\ref{lbl:fig:python:qcs:bell:state:generator} and the appropriate computational procedure to calculate the von Neumann entropy it is possible to prepare a simple benchmark. Additionally, to obtain comparable results we prepared simple test as a script in Python language for quantum registers built only from qubits. The test contains the following computation steps: first we generate n quantum registers then for every register the von Neumann entropy is calculated. After these steps we sort the obtained list using the classical method called sorting by selection. In Fig.~\ref{lbl:fig:linear:sort:time} we present the real time necessary to perform this simple test.

\begin{figure}
\begin{center}
\scriptsize
\begin{tabular}{|c||c|}
\hline
Number of registers &  Time (results in secs) \\ \hline
10 & 0.0008762 \\
100 & 0.0048304 \\
1000 & 0.1407390 \\
2000 & 0.4907356 \\
4000 & 1.8053686 \\
10000 & 10.643833  \\
\hline
\end{tabular}
\end{center}
\caption{The time consumed by sorting tests which use randomly generated quantum registers with a different amounts of entanglement in the sense of von Neumann's entropy}
\label{lbl:fig:linear:sort:time}
\end{figure}

\subsection{Computational complexity analysis}

The computational complexity of Algorithm (\ref{lbl:linear:sorting:entropy}) is given by following equation:
\begin{equation}
T(n) = \sum_{i=1}^N \left( T_{1}(n_i,d_i) + T_{2}(n_i,d_i) + T_{3}(n_i,d_i) \right) + T_{sort}(n)
\end{equation}
where $N$ is the total number of quantum registers and $d_i$ represents freedom level of qudit used in given $n_i$ quantum register. Additionally, $T_1(\cdot)$ represents the complexity of partial trace calculation, $T_2(\cdot)$ is the complexity of calculation of the eigenvalues and eigenvectors and $T_3(\cdot)$ is the complexity of the von Neumann entropy calculation. Each of mentioned complexity functions work on matrices and if we assume that $n$ is the size of matrix and $d$ is freedom level of a state which is given by density matrices we obtain the following relations:
\begin{equation}
T_1(n,d) = d n^{2}, \;\;\; T_2(n,d) = n^3, \;\;\; T_3(n,d) = n .
\end{equation}
The complexity of $T_{sort}(n)$ depends on the algorithm used to sort the obtained quantum registers, the value of entropy is used to compare two registers. If we use one of the popular sorting methods like Heapsort with complexity given by $O( n \log(n) )$, the complexity of Algorithm (\ref{lbl:linear:sorting:entropy}) will be
\begin{equation}
T(n) = N( d n^{2} + n^3 + n) + n \log(n) ,
\end{equation}
where the process of computation of eigenvalues and eigenvectors is the most time-consuming part of the whole process of sorting quantum registers.

The second algorithm of sorting quantum states (Algorithm (\ref{lbl:partial:order:sorting})) contains oracle routine as described by Algorithm (\ref{lbl:oracle:two:vec:locc}). The complexity of oracle for the worst case when the ranks are equal is given by:
\begin{equation}
T(n) = n^3 + n \log(n) + n^2 = O(n^3),
\end{equation}
The procedure of calculation the singular value decomposition dominates the computational complexity of oracle routine. It is important to stress that in the oracle routine we also sort the Schmidt coefficients, but by using the classically effective algorithm. However, the SVD still dominates the complexity of \textsc{SD-QueryOracle}. 

It is known \cite{6} that Algorithm (\ref{lbl:partial:order:sorting}) calls the query at most $\mathcal{O}(w \cdot n \log n)$, where $w$ is the maximal width of poset containing $n$ elements but the time of SVD again dominates the whole process of partial sorting of quantum states.

\section{Conclusions and further work}

Basing on the canonical Schmidt decomposition of quantum states a specific semiorder relation has been introduced in the space of quantum states of a given bipartite system. In the case of pure states the introduced semiorder relation possesses a very clear operational meaning as described by Nielsen \cite{3} for the first time. Whether the same operational meaning can be affiliated with the analogous semiorder relation defined in the space of all quantum states should be explained.

Additionally, some version of sorting algorithm of the arising posets, the so called ChainMerge sorting and basing on the particular version of query oracle comparing the amount of entanglement in two quantum states is presented and tested in the case of vector states. The following extensions of the present material seems to be worthwhile to perform: (a) to extend the Nielsen result \cite{3} to cover the case of general quantum states, (b) to formulate several different version of sorting posets algorithm with special emphasis putted on their computational complexity,
(c) to formulate different version of query oracles for comparing the amount of entanglement included in two general states of bipartite systems.


\end{document}